\def\BibTeX{{\rm B\kern-.05em{\sc i\kern-.025em b}\kern-.08em
    T\kern-.1667em\lower.7ex\hbox{E}\kern-.125emX}}
\definecolor{bblue}{HTML}{4F81BD}
\definecolor{rred}{HTML}{C0504D}
\definecolor{ggreen}{HTML}{9BBB59}
\definecolor{ppurple}{HTML}{9F4C7C}
\newtheorem{theorem}{Theorem}
\newtheorem{definition}{Definition}
\newtheorem{lemma}{Lemma}
\newtheorem{proof}{Proof}
\tiny\color{gray},           
\footnotesize\color{darkgray}, 
\it\color[RGB]{0,96,96},      
\slshape\color[RGB]{128,0,0},   
\newcolumntype{L}{>{\arraybackslash}X}
\definecolor{findOptimalPartition}{HTML}{D7191C}
\definecolor{storeClusterComponent}{HTML}{FDAE61}
\definecolor{dbscan}{HTML}{ABDDA4}
\definecolor{constructCluster}{HTML}{2B83BA}
\begin{document}

\title{\Large A Weak Consensus Algorithm and Its Application to \\ High-Performance Blockchain}


\author{\IEEEauthorblockN{Qin Wang \IEEEauthorrefmark{2}\IEEEauthorrefmark{4}$^\eth$\thanks{$\eth$: These authors contributed equally to the work.}, Rujia Li \IEEEauthorrefmark{1}\IEEEauthorrefmark{3}$^\eth$
}
\IEEEauthorrefmark{1}  \textit{Southern University of  Science and Technology}, Shenzhen, 518055, China.\\
\IEEEauthorrefmark{2} \textit{Swinburne University of Technology}, Melbourne, VIC 3122, Australia. \\
\IEEEauthorrefmark{3} \textit{University of Birmingham}, Edgbaston, B15 2TT, United Kingdom.\\
\IEEEauthorrefmark{4} \textit{HPB Foundation}, DUO Tower, 189352, Singapore. 
}

\maketitle
%

\begin{abstract}
A large number of consensus algorithms have been proposed. However, the requirement of strict consistency limits their wide adoption, especially in high-performance required systems. In this paper, we propose a weak consensus algorithm that only maintains the consistency of relative positions between the messages. We apply this consensus algorithm to construct a high-performance blockchain system, called \textit{Sphinx}. We implement the system with 32k+ lines of code including all components like consensus/P2P/ledger/etc. The evaluations show that Sphinx can reach a peak throughput of 43k TPS (with 8 full nodes), which is significantly faster than current blockchain systems such as Ethereum given the same experimental environment. To the best of our knowledge, we present the first weak consensus algorithm with a fully implemented blockchain system.

\end{abstract}
\smallskip
\begin{IEEEkeywords}
Consensus algorithm, Blockchain, Performance
\end{IEEEkeywords}

\section{Introduction}
The consensus mechanism is a critical component in distributed systems, providing a powerful means of establishing agreement as to the network’s current state. With the promotion of blockchain, consensus mechanisms obtain tremendous attention due to their influential roles in secure token transferring. Generally, two mainstream types of the consensus algorithms are identified \cite{bano2019sok}\cite{vukolic2017rethinking}, namely, the classic Byzantine Fault Tolerant (BFT) protocols  \cite{CastroL99}\cite{castro2002practical} and the newly proposed Nakamoto consensus (NC) \cite{garay2015bitcoin} such as PoW~\cite{bonneau2015sok,nakamoto2008peer}, PoS~\cite{kiayias2017ouroboros}, PoA~\cite{toyoda2020function}, \textit{etc}. However, blockchain systems adopting these algorithms suffer from low-performance issues due to massive communication or intensive computation. For example, Bitcoin requires competitive computations to decide the valid chain, whose rate is limited to 7 transactions per second (TPS) \cite{nakamoto2008peer}. The limitations 
greatly hurdle the widespread adoptions in real scenarios. This leads us back to their core mechanisms.

BFT protocols have been proposed to achieve consensus in the presence of malicious nodes, where the tolerance is maximal $2/3$ of the total nodes. BFT protocols require the negotiation process for final decisions. A typical system, PBFT \cite{castro2002practical}, is illustrated in \underline{Fig.\ref{fig:bft}.a}. The leader sends a proposal to replicas, and replicas distribute their replies. Then, after receiving valid replies over the predefined threshold, a replica broadcasts his status (whether ready for the new state) to others. The decision is made once upon the received \textit{commit} messages exceeds the threshold. Time consumption in such a process is unpredictably unstable due to factors like network delay. Interactive communications consequently limit the performance of the BFT consensus and increase the communication overhead.

Nakamoto consensus, in recent decades, stands out in one critical aspect thanks to its remarkable simplicity. NC breaks the assumption that only the closed committee can conduct the consensus, instead, it enables all participants to get involved in the consensus process. NC protocols patterned after Bitcoin \cite{nakamoto2008peer} do not reach consensus with finality. NC protocols remove the interactive model and adopt a competition rule -- \textit{the longest chain wins}. As shown in \underline{Fig.\ref{fig:bft}.b}, blocks generated by miners are randomly attached to their ancestors. Only the chain who has most descendants survives, whereas other competitive sub-chains are abandoned. The finality is progressively achieved by letting blocks bury deep enough. Thus, conflict solving in NC significantly slows down the confirmation of blocks.

We observe that protocols based on these two types of consensus mechanisms follow the same principle that: only one block is deemed as confirmed at one round (equal block height in NC). This greatly constrains their overall performance, since the procedure of conflict solving and total ordering serving for \textit{strong consistency} costs much more time than expected. Such mechanisms hurdle their widespread adoptions~\cite{kogias2016enhancing}\cite{gervais2016security}, particularly, for some high performance required scenarios~\cite{wenhao2020blockchain}. To mitigate such a limitation, we ask the following question,

\smallskip
\textit{Is it possible to propose a consensus algorithm to improve the performance by weakening the guarantee of consistency?}
\smallskip

Intuitively, the answer should be ``No''. The state consistency is the core property for the consensus mechanism. Strict consistency ensures the distributed network reaches an agreement on the total order of transactions in the presence of fault maintainers and adversarial network delay. All distributed nodes have the same global view at each specific height. This guarantees that states are transited in an organized and managed way, supporting upper-layer establishments like smart contracts. Disordered transactions, on the contrary, indicate ambiguous states where users may feel confused when invoking the blockchain service. For example, Alice sends a transaction to Bob. If this transaction is stored in more than one block, Bob cannot know which position provides a valid transaction. However, in some scenarios, strict consistency is not firmly required, such as the blockchain-based certificate system. The main target of the certificate prover is to confirm that a certificate is indeed stored in the chain. The specific position of this certificate does not matter; even a duplicated storage of certificates is allowed. It should be noted that the partial consistency in several DAG-structure projects \cite{bagaria2019prism}\cite{yu2018ohie} is still sensitive for the position of transactions since their upper-layer applications \cite{wang2020sok}, such as token transferring, are still based on a fixed sequence of transactions.

\begin{figure}[htb!]
    \centering
    \includegraphics[width=0.83\linewidth]{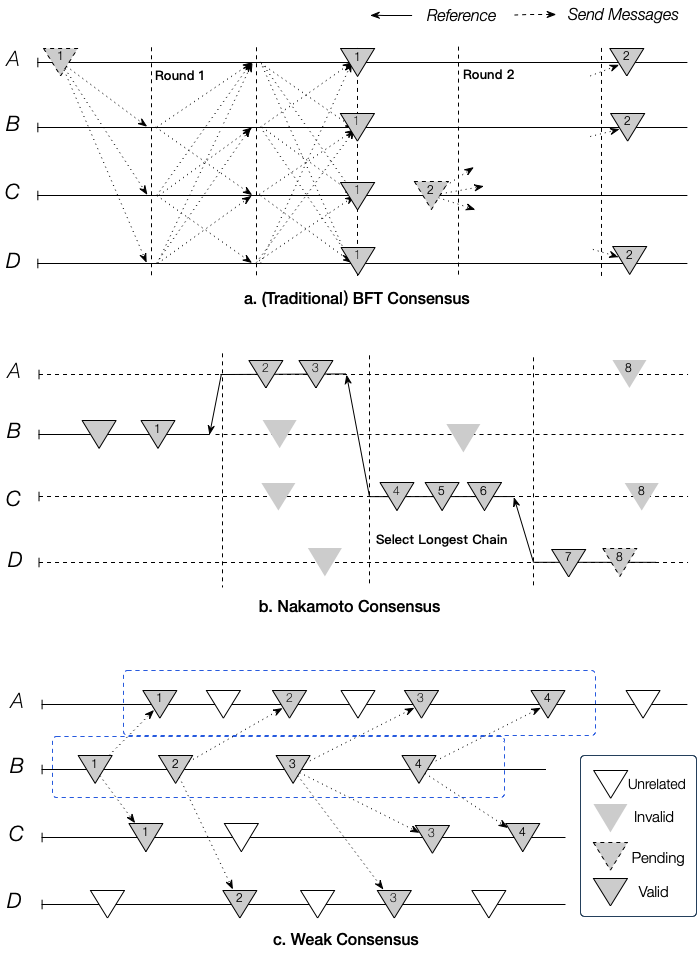}
   \caption{Consensus Mechanisms}
    \label{fig:bft}
\end{figure}

In this work, we propose a new consensus mechanism, called \textit{weak consensus}, to fit for the aforementioned scenarios. Our design weakens the guarantee of strict consistency and relaxes the property of persistence \cite{garay2015bitcoin}. Weak consensus mainly guarantees that the relative sequences of blocks in one individual chain remain consistent with that in the other chains. As illustrated in \underline{Fig.\ref{fig:bft}.c}, node $B$ creates a serial of blocks $1, 2, 3, 4$. Our goal is to ensure that the sequence of  $(B1 \to B2 \to B3 \to B4)$ can be correctly maintained across chains, no matter how many blocks (generated by other nodes) are inserted between them. Blocks in our model are required to receive replied messages from peers, saying that they have successfully stored the blocks. Whenever a block collects \textit{commit} messages more than the threshold, it is deemed as confirmed. To demonstrate the robustness of our consensus, we formally define the properties \textit{relative persistence} and \textit{liveness}, inspired by \cite{garay2015bitcoin}. Relative persistence focuses on the relationship between a predecessor and its successor, ensuring the correctness of the relative position. Liveness guarantees that all nodes would eventually agree on the relationship of blocks. Furthermore, we apply this algorithm to a blockchain system, called \textit{Sphinx}, with a full implementation. In summary, we make the following contributions.

\begin{itemize}
\item[$\diamond$] We identify the reasons causing the performance bottleneck of current consensus protocols, and propose a weak consensus mechanism to enable parallel processing of chains by weakening the guarantee of strict consistency.

\item[$\diamond$] We apply our designed weak consensus algorithm to a high-performance blockchain system with the properly defined model and  strictly proved robustness. 

\item[$\diamond$] We provide a full implementation\footnote{Github \url{https://github.com/rjgeek/sphinx}} of Sphinx with evaluations on its performance. The results show that our system is practically efficient with 43k TPS (with 8 full nodes). 

\end{itemize}

The rest of our paper is structured as follows: A high-level algorithm design is illustrated in Section~\ref{sec-design}. The system protocol and its implementation are shown in Section~\ref{Sec-archi} and \ref{sec-imple}, respectively. The security analysis is given in Section ~\ref{sec-analysis}, and evaluations of our system are provided in Section \ref{sec-evaluation}, followed by case examples in Section \ref{sec-example}. The related studies are discussed in Section \ref{sec-relatedwork}. Finally, the conclusion and future work are discussed in Section ~\ref{sec-conclusion}.

\section{Weak Consensus Algorithm}
\label{sec-design}

This section provides security assumptions and the general construction of our consensus algorithm with corresponding security properties.
\subsection{Notations}
We denote the nodes in our protocol as $\mathit{N}$ and identify each of them as $\{\mathit{N}_0,\mathit{N}_1,\dots, \mathit{N}_n\}$, where $n$ is the index of committee members satisfying $n=3f+1$. Let $i$ be a growing integer satisfying $0 \leq i \leq r$ where $r$ is the index of states and $j$ be an integer satisfying $0<j\leq n$. Assume that $\mathbb{M}$ is the message space, $\mathbb{S}$ is the state space and $\mathbb{R}$ is the reference space. $M$, where $M\in\mathbb{M}$, is the message proposed by some node. $PF$ is the proof of a successful insertion of $M$. $S$ represents a confirmed state satisfying $S\in \mathbb{S}$. $S_{N_j}^i$ is the $i$-th confirmed state in the node $N_j$, where $S_{N_j}^i \in \{ \mathbb{S} | S_{N_1}^0,\dots,S_{N_1}^r; \dots; S_{N_n}^1,\dots,S_{N_n}^r \}$. These two parameters are used to locate a specific state in the network. $S^{r}_{\{{N_0},\dots,{N_n}\}}$ refers to the states received from other nodes in current round $r$. $\Downarrow$ is the reference which indicates the relative positions between two states. Specifically, $\Downarrow^B_A$ is the reference pointing from $B$ to $A$. It defines a \textit{happens-before} relationship, such that $A$ happens before $B$. More specifically, $\Downarrow^{S^{y}_{N_j}}_{S^{x}_{N_j}} $ means that $S^x_{N_j}$ is an ancestor of $S^y_{N_j}$ where $0<x<y\leq r$. Further, $\Downarrow^{S_{N_j}^{i}}_{S_{\star}^{i-1}}$ represents a set of references including the edges from the state $S_{N_j}^{i}$ to the states $S_{N_0}^{i-1},\dots,S_{N_n}^{i-1}$ (\textit{a.k.a.}, the out-degree edges of $S_{N_j}^{i}$). Correspondingly,  $\Downarrow_{S_{N_j}^{i-1}}^{S_{\star}^{i}}$ contains all the edges from the states $S_{N_0}^{i},\dots,S_{N_n}^{i}$ to the state $S_{N_j}^{i-1}$ (\textit{a.k.a.}, the in-degree edges of $S_{N_j}^{i-1}$). 

\subsection{Security Assumption}

We assume that the honest nodes will always conduct honest behaviors, where the messages sent to the peers are correct. As for the underlying network, we follow the implicit assumption of a partial synchronous network. In particular, the network of honest nodes in our system are well connected, and the communication channels between honest nodes are unobstructed. Messages from honest broadcasters may be delayed, but they will eventually arrive at others within known maximum delay $\delta$ \cite{pass2017analysis}. Our algorithm follows the basic design of classic BFT-style protocols with the aim to tolerate one-third of Byzantine nodes. Specifically, we assume that there are $3f+1$ nodes in total, and the number of participated nodes is fixed. It indicates that the dynamics of peer participation, or churn, are out of our consideration. Also, we assume that at least $2f$ of them work honestly, where $f$ is the number of Byzantine nodes.

\subsection{Protocol Overview}

\begin{table*}[htb!]
  \centering\setlist[itemize, 1]{noitemsep,topsep=0pt, wide=0pt, leftmargin =\dimexpr \labelwidth+ 2\labelsep\relax, after=\vspace*{\dimexpr-4\partopsep}}
  \caption{\small{Comparison between PBFT algorithm and Our algorithm}}
  \begin{tabularx}{\textwidth}{lLL}
    \toprule
      & \thead{PBFT algorithm~\cite{CastroL99}} & \thead{Our algorithm} \\
    \midrule
    \textbf{Request Stage}
    &
    \begin{itemize}
    \item A leader is required. A client sends a request to the primary node. If the primary node has changed/rotated, it will broadcast the request message to all replicas.   
    \end{itemize}
    &
    \begin{itemize}
    \item No leader exists in our algorithm. Every node acts similar behaviors. A client sends a request to a random node, where a request message is represented as the relative position.
    \end{itemize} \\
    
    \midrule 
    \textbf{Normal Case}
    &
    \begin{itemize}
    \item \textit{Pre-Prepare:} The primary node puts the pending requests in a total order and initiates agreement by sending \textit{Pre-prepare} message to all replicas.
    \item \textit{Prepare:} Replica acknowledges the receipt of a \textit{Pre-prepare} message by
sending \textit{Prepare} message to other replicas.
    \item \textit{Commit:} Replica acknowledges the reception of $2f$ \textit{Prepare} message matching a valid pre-prepare by broadcasting the \textit{Commit} message to peers.
    \end{itemize}
    &
    \textbf{Every node} executes the following actions in parallel.
    \begin{itemize}
    \item \textit{Pre-Prepare:} The same algorithm with PBFT.
    \item \textit{Prepare:} The same algorithm with PBFT.
    \item \textit{Commit:} The same algorithm with PBFT.
    \end{itemize} \\

    \midrule 
    \textbf{View Change}
    &
    \begin{itemize}
    \item It ensures that the system can always proceed by allowing replicas to change the leader so as to not wait indefinitely for a faulty primary.
    \end{itemize}
    &
    \begin{itemize}
    \item No needs for the view change progress. Alternatively, the complementary mechanism was adapted to ensure the correct relative persistence to be held.
    \end{itemize} \\
    
    \midrule
    \textbf{Garbage collection}
    &
    \begin{itemize}
    \item The checkpoint mechanism is used to ensure the safety condition to be held.
    \end{itemize}
    &
    \begin{itemize}
    \item The timeout mechanism is used to ensure the liveness condition to be held. 
    \end{itemize} \\
    
    \bottomrule 
  \end{tabularx}
  \label{tab:GPS_INS_comparison}
\end{table*}

The protocol is modeled as a state machine which is replicated across distributed nodes. Each node in the network maintains a message log containing the \textit{accepted} message and the current state. Meanwhile, in our algorithm, a node must maintain the states received from other nodes. We present our protocol by following the description of PBFT~\cite{CastroL99}\cite{castro2002practical}\cite{sukhwani2017performance}: the protocol proceeds in rounds, and each round has three phases, namely \textit{Pre-prepare}, \textit{Prepare} and \textit{Commit}. We provide a protocol overview of our protocol as follows. 

\begin{itemize}
\item[-] \textbf{Pre-prepare.} The primary node receives the client requests and inserts the such messages into the local chain. Then, the node creates a \textit{Pre-prepare} message to claim the relative position between two client messages. Subsequently, it broadcasts the signed message to peers.

\item[-] \textbf{Prepare.} The node receives the $\textit{Pre-prepare}$ message and checks the integrity, correctness, and validity. When the received $\textit{Pre-prepare}$ message passes the verification, the node updates his local-stored state and broadcasts the replied $\textit{Prepare}$ message to claim the correct relative position. Otherwise, the node aborts it. 

\item[-] \textbf{Commit.} If any node receives a quorum $2f + 1$ of valid $\textit{Prepare}$ messages from peers (possibly including his own) within specified time interval, this node confirms the proposed decision by broadcasting a replied $\textit{Commit}$ message. When a node collects more than $2f + 1$ \textit{Commit} messages, this node transits the state and replies to clients with updated states.

\end{itemize}

\smallskip
\noindent\textbf{Complementary Mechanism.} Our protocol aims to achieve an eventual confirmation of the relative relationship between two states. It is possible that the relative relationship cannot be committed due to the lack of $2f + 1$ matched \textit{Commit} messages. \underline{Fig.\ref{fig:conflict}} provides an example to explain the flaws. We assume that there are three nodes, and the relative position of states $S_{N_2}^{1}$ and $S_{N_2}^{2}$, saying $\Downarrow^{S^{2}_{N_2}}_{S^{1}_{N_2}}$, in the node $N_2$ have already been confirmed by the node $N_3$. By the design of previous protocols, Sphinx achieves an agreement by receiving $2f + 1$ replies. However, the node $N_1$ may store the conflicting states, namely, $\Downarrow^{S^{1}_{N_1}}_{S^{2}_{N_1}}$. Note that, here, we assume $\Downarrow^{S^{1}_{N_2}}_{S^{2}_{N_2}}$= $\Downarrow^{S^{1}_{N_1}}_{S^{2}_{N_1}}$, indicating that the states sent from $N_2$ are finally confirmed in the node $N_1$, and accordingly subscripts are changed. Without re-pulling the state from other nodes when the bound set in the counter is exceeded, the conflicting states will never be able to be reversed. Thus, the complementary mechanism reattaches the state $S_{N_1}^{2}$, and the updated $S_{N_1}^{2}$ will replace the outdated version. The relative positions in $N_1$ are thus returned to correct positions, saying $\Downarrow^{S^{2}_{N_1}}_{S^{2}_{N_1}}$ = $\Downarrow^{S^{2}_{N_2}}_{S^{1}_{N_2}}$ = $\Downarrow^{S^{2}_{N_3}}_{S^{1}_{N_3}}$. The above procedure is compulsory for $N_1$. If $N_1$ cannot accept relative positions from honest nodes $N_2$ and $N_3$, the new messages that follow $N_1$ will not be accepted by $N_2$ and $N_3$ since his \textit{Pre-prepare} message is based on the latest state from others. More details are provided in our security analysis.

\begin{figure}[htb!]
    \centering
    \includegraphics[width=0.8\linewidth]{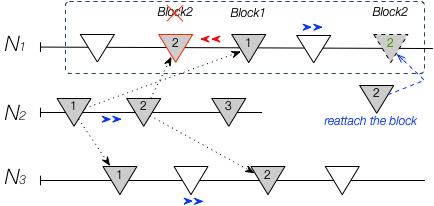}
    \caption{Complementary Mechanism}
    \label{fig:conflict}
\end{figure}

\smallskip
\noindent\textbf{Highlighted Differences.} Our protocol differs from PBFT in four aspects (see \underline{Table.\ref{tab:GPS_INS_comparison}}): (a) Our protocol is an asynchronously leaderless Byzantine agreement protocol. Instead of relying on a single leader, Sphinx removes the leader-associated phases enabling every participant involved in the consensus procedures. Thus, every participant does not need to wait for the latest state synchronized from others. (b) Every consensus node in the network conducts the similar behaviors (\textit{pre-prepare}, \textit{prepare}, \textit{commit}). These nodes independently proceed but mutually interact with each other by cross-references. (c) We remove the auxiliary mechanisms (such as \textit{checkpoint}, \textit{view-change}) of PBFT protocols. Instead, we provide a brief complementary mechanism to solve conflicts like reserved positions between two states. (d) Our protocol weakens the assumption of \textit{strong consistency}, with the gain of higher performance and lower confirmation time. We follow the \textit{liveness} property from PBFT. Further, we introduce a new security definition \textit{relative persistence}, which is inspired by the properties of $\textit{agreement}$~\cite{CastroL99} and $\textit{persistence}$~\cite{garay2015bitcoin}.

\subsection{Security Properties}
Our algorithm weakens the strong guarantee of consistency by reaching a partial consistency instead of a linear consistency. The procedure of total ordering is no longer needed in our consensus. Each node individually creates new states and simultaneously stores the remote (other's) states. We allow one state to be stored in multiple positions across parallel chains. The key idea is to keep the consistency of the relative position between two states. Based on that, we formalize our algorithms by two properties: \textit{relative persistence} and \textit{liveness}. \textit{Relative persistence} means the relative positions of two states are irreversible once enough honest nodes report it as confirmed. \textit{Liveness} means once the relative position between two states is confirmed by one honest node, it should be eventually confirmed by other honest nodes in the network.

\smallskip
\begin{definition}[Relative persistence] 
\textit{Sphinx achieves the property of relative persistence, if for all relationship in $\mathbb{R}$, there exists a negligible function $negl(\lambda)$ such that $adv_{\mathbb{N}}^{{\mathbb{R}}}(\lambda) < negl(\lambda)$, where $adv_{\mathbb{N}}^{\mathbb{R}}(\lambda)$ is the advantage in which the decisions on the same relationship $\Downarrow^{S^y_{N_i}}_{S_{N_i}^{x}}$ made by any two honest nodes are conflicting. Here, $0<x<y\leq r$. }
\end{definition} 

The \textit{relative persistence} property ensures that as soon as the relative position between two states has been confirmed by honest node, this relationship will ultimately be confirmed in every node in the network with high probability. The property guarantees that the relative positions remain consistent between two states across paralleled chains.

\smallskip
\begin{definition}[Liveness] 
\textit{Sphinx achieves the property of liveness, if for all PPT honest nodes, there exists a negligible function $negl(\lambda)$ such that $adv_{\mathbb{N}}^{\mathbb{R}}(\lambda) < negl(\lambda)$, where $adv_{\mathbb{N}}^{\mathbb{R}}(\lambda)$ is the advantage that the honest node does not accept the correct relationship $\Downarrow^{S^y_{N_i}}_{S_{N_i}^{x}}$. Here, $0<x<y\leq r$.}  
\end{definition}

\textit{Liveness} guarantees that all nodes eventually agree on a unique relationship regarding each chain. The unique relationship represents the relative position between a state and its ancestor. The term \textit{eventually} indicates that it may take a sufficient amount of time (within the upper bound of $\delta$) to reach the agreement. The property ensures that a state will either be abandoned or accepted, instead of permanently pending status.

\section{Sphinx System}
\label{Sec-archi}
In this section, we firstly introduce the cryptographic building blocks used to build the scheme. Then, we present a high-performance blockchain system that adopts the weak consensus algorithm.

\subsection{Cryptographic Building Blocks}

\noindent \textbf{Merkle Hash Tree.} Merkle tree \cite{ryan2014enhanced} is a tree based on the one-way cryptographic hash function. The data is compressed in the form of the cryptographic hash. Every node in the Merkle tree is labeled with a hash composed of its children nodes. This technique enables efficient and secure verification for the transactions in the blockchain system.

\smallskip
\noindent\textbf{Signature Scheme.} A signature scheme consists of three probabilistic polynomial-time algorithms (\textit{kgen}, \textit{sig}, \textit{verify}), where \textit{kgen} generates a private signing key and a public verification key, \textit{sig} outputs a signature on the message, and \textit{verify} is used to check the signature. 

\subsection{Entities.}  
Sphinx mainly consists of two types of nodes, namely \textit{blockchain node} and  \textit{client node}. The client node is the creator of the message, and it allows to send a request to the ledger recorder and wait for the replies after the consensus is completed. The blockchain node is responsible for two functionalities: \textit{record} and \textit{validate}. The former functionality is used to record the local chain. Another one is employed to check the correctness of the recording progress.

\subsection{System Design}
Our concrete construction, \textit{Sphinx}, is based on the weak consensus algorithm.  The state in Sphinx is instantiated as the block, which is validated and confirmed by peer nodes. The message inherits the classic blockchain structure, which includes the fields of address, timestamp, metadata \textit{etc}. Every chain has two types of references in our system: the one pointing to its own parent block and the other pointing to the peers. Explicitly, our system embraces cross-referencing to increase the blockchain's security, where multiple nodes simultaneously generate their own chains in parallel and validate the blocks of other nodes. The cross-reference ensures that each chain can mutually validate others' behaviors, such as whether they maintain a consistent sequence of blocks by checking their previous Merkle roots. This mechanism guarantees the consistency of relative positions between two blocks. The concrete protocol is presented as follows.

\smallskip
\noindent\textbf{Pre-prepare.} Each node maintains an individual message pool and an independent ledger. When a node $N_p$ receives a request (message) $M$ from the client, it firstly checks the message's syntax to ensure the correct execution. If passed, it sorts the received client messages in the local pool. These messages are ordered by their timestamps (\textit{e.g.,} Lamport timestamps~\cite{CastroL99}), in which the latest messages have higher priorities than the earlier ones. The first received message is processed in the algorithm, whereas the conflicting/duplicated messages are discarded. Based on the ordered sequence, the node assigns a valid sequence number $SN$ to $M$. We emphasize that the sequence number $SN$ is a local variable used to represent the index $r$ of a single chain. It is different from the global variable $SN$ defined in BFT algorithm~\cite{CastroL99}\cite{castro2002practical}. The index $r$ in our system helps to locate a block in the local chain, while locating a block in the global view requires an additional parameter of chain id $N_j$ to form a coordinate $(r,N_j)$. 

After that, it inserts the message $M$ to its local chain. The insertion mainly merges the hash of $M$ and the states of remote blocks $(S_{N_0}^r, S_{N_1}^{r}, \dots, S_{N_n}^{r})$ into a new \textit{Merkle Tree}. Next, it signs the Merkle root and appends this block to a public log by generating the proof $PF$ which contains three types of proofs: (a) $P_M$ used to prove the log contains $M$; (b) $P_B$ as an extension of old blocks; (3) $P_{S}$ to prove $S$ exists in trees. 

The message $M$ and the states $(S_{N_0}^r, S_{N_1}^{r}, \dots, S_{N_n}^{r})$ are stored in the leaves of the Merkle tree from left to right in chronological order. Thus, the proof can be easily calculated. A Merkle tree contains the items of $M$ and the states $(S_{N_0}^r, S_{N_1}^{r}, \dots, S_{N_n}^{r})$. In our design, these items are solely stored at leaves. After that, the current node broadcasts the \textit{Pre-prepare} message $\langle \mathsf{Pre}-\mathsf{prepare}, M, SN, S_{N_p}^{r}, PF, \Downarrow^{S_{\star}^{r}}_{S_{N_p}^{r-1}}\rangle$ to the peer nodes, where $\Downarrow^{S_{\star}^{r}}_{S_{N_p}^{r-1}}$ represents the relative position between the $r$-th state and $(r-1)$-th state.

\smallskip
\noindent\textbf{Prepare.} Assume that a random node $N_q$ receives the $\textit{Pre-prepare}$ message $\langle \mathsf{Prepare}, S_{N_p}^{r}, M, PF, \Downarrow^{S_{\star}^{r}}_{S_{N_p}^{r-1}} \rangle$ from the node $N_p$. The node $N_q$ validates the correctness of $\textit{Pre-prepare}$. The algorithm checks whether: (a) the signature of the $S^{\star}$ is correctness; (b) the message $M$ has been inserted to $N_p$; (c) the previous state $S_{q}^{\star}$ has been inserted to $N_p$; (d) the state of the claimed message has no conflict. When the received $\textit{Pre-prepare}$ message is checked successfully, the node $N_q$ updates his local-stored state $S_p$, and inserts the received message to his local log. Then, it generates and broadcasts a reply $\textit{Prepare}$ message  $\langle\mathsf{Prepare}, SN, d(SN)\rangle$, in which $d(SN)$ means the hash digest of the state. Otherwise, the node aborts the $\textit{Pre-prepare}$ message. Note that the same $\textit{Pre-prepare}$ message can only be accepted once, and the duplicated $\textit{Pre-prepare}$ message will be discarded. 

\smallskip
\noindent\textbf{Commit.} If a node receives a quorum $2f + 1$ of valid $\textit{Prepare}$ message from different nodes (possibly including his own), it confirms the proposed message and broadcasts a $\textit{Commit}$ message $\langle\mathsf{Commit},SN,d(SN)\rangle$. Then, this node collects the $\textit{Commit}$ messages from all peers. Once exceeding the threshold ($2f + 1$), the node accepts the updated state. Then, this node replies to the client with new states. 

\smallskip
\noindent\textbf{Complementary Mechanism} The aforementioned phases are insufficient since some malicious $2f-1$ nodes may store the wrong (opposite) relative positions, making our system fail in satisfying the security property of relative persistence. It is possible that the relative relationship cannot be committed due to the lack of $2f + 1$ matched \textit{Commit} messages. The states, as a result, will be never terminated. To solve this problem, we introduce our complementary methods. On the one side, each message is embedded with a counter. If the message fails due to the lack of enough confirmation, the procedure of rebroadcast will be launched, and the counter increases each time of a retry. If the accumulated value is greater than the bound set in the counter, the node will pull the newest state from other peers and accept the reversed relationship and rebroadcast it. If a node collects more than $2f + 1$ \textit{Commit} messages on the reversed position, the node replies to clients with updated states. Otherwise, the message will be aborted and send a \textit{Failure} message to the client. On the other side, when the waiting time exceeds the predefined time bound in the counter, the message is aborted with a \textit{timeout} message sent to the client. The complementary mechanism is  essential to achieve the properties of \textit{relative  persistence} and \textit{liveness}.

\section{Implementation}
\label{sec-imple}
To evaluate our Sphinx, we have implemented the system in Go language with 32,000+ lines of code. We have developed full functionalities of a classic blockchain system, including account configuration, consensus mechanism, peer to peer network, user interface, \textit{etc.}  We employ Go’s built-in hash function \textit{SHA-256} and elliptic curve digital signature algorithm \textit{secp256k1}. Here, we focus on key functions to present a skeleton of our implementation. Example code segments together with the workflow are illustrated in
\underline{Fig.\ref{fig:imple}}. To be specific, $\mathsf{ValidateState}$ validates the changed state after the state transition, such as the receipt roots and state roots. The function will return a database handle if the validation turns out a success. Otherwise, an error is returned. $\mathsf{ValidateBody}$ validates the uncle blocks and verifies their header's receipts. The headers are assumed to be already validated at this point. $\mathsf{NewBlockChain}$ returns a fully initialized blockchain by loading the information in the database. It initializes the default validators. $\mathsf{FastSyncCommitHead}$ inserts the committed head block to others by the form of hashes. $\mathsf{GetBlocksFromHash}$ returns the block corresponding to hash and up to $n-1$ ancestors. $\mathsf{InsertHeaderChain}$ attempts to insert the headers of parallel chains into the local chain. $\mathsf{Insert}$ inserts or rejects a new header of the block into the current chain. $\mathsf{Confirm}$ aims to ensure whether threshold conditions are satisfied by a block.

\begin{figure*}[htb!]
   \centering
    \includegraphics[width=0.95\linewidth]{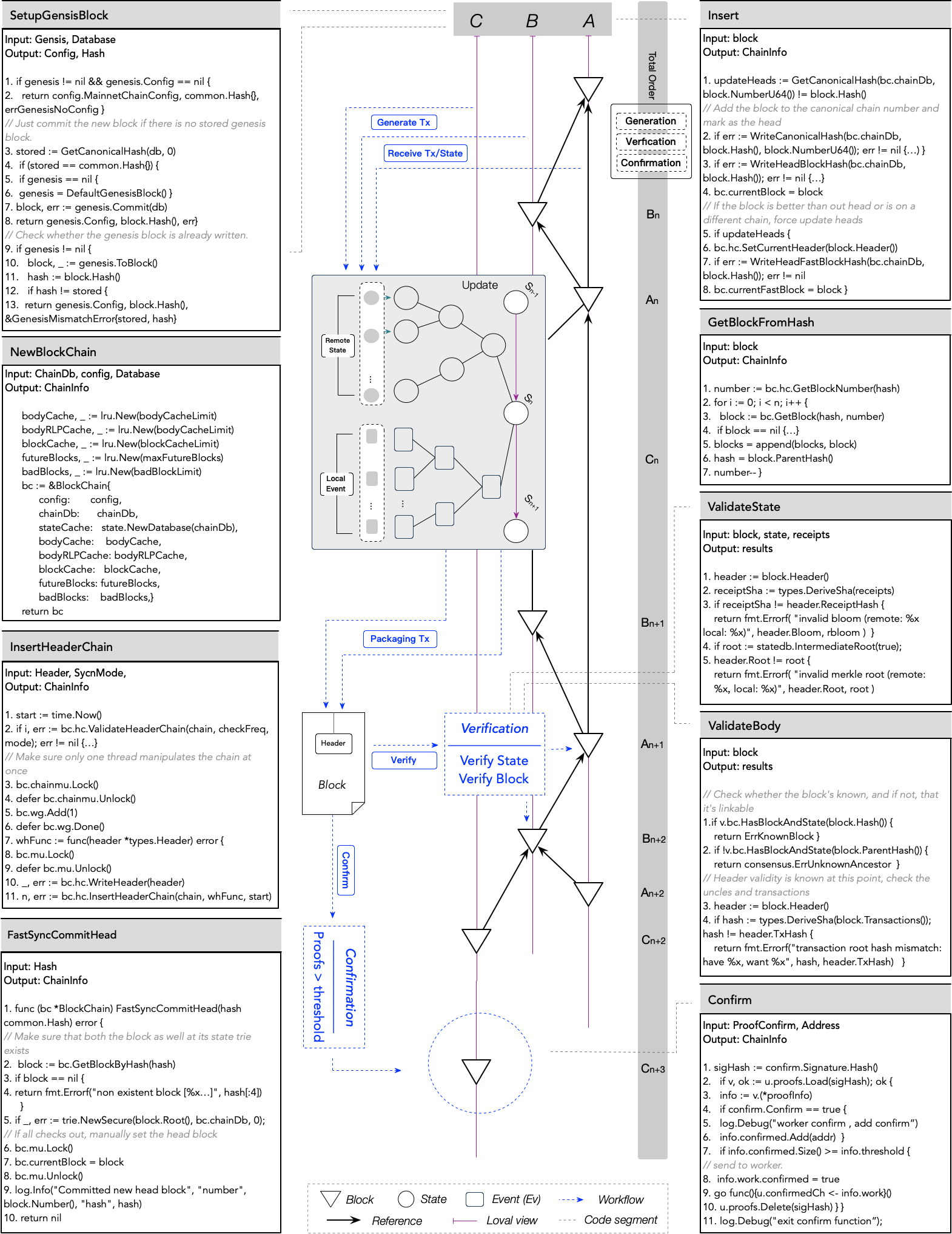}
    \caption{Sphinx Implementation}
    \label{fig:imple}
\end{figure*}

\section{Security analysis}
\label{sec-analysis}
In this section, we prove that our protocol satisfy~\textit{relative persistence} and~\textit{liveness}. We firstly assume that honest nodes are consistent for their commits, which means if an honest node accepts a relative state, all his commits in this iteration and the following iterations are consistent. Formally, we define the above intuition in the Lemma~\ref{lem}.

\smallskip
\begin{lemma} 
\label{lem}
\textit{Suppose a node $N_i$ is the first honest node to commit a relationship $\Downarrow^{S^{y}_{N_i}}_{S^{x}_{N_i}}$ for the relative positions between the $y$ and $x$. In all subsequent iterations, all commits from peers can construct valid decisions on the relationship   $\Downarrow^{S^{y}_{\star}}_{S^{x}_{\star}}$}.
\end{lemma}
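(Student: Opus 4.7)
The plan is to invoke the standard quorum-intersection argument that underlies BFT-style safety proofs, adapted to the relative-position setting of Sphinx. I would begin by assuming, for contradiction, that after $N_i$ is the first honest node to commit the relationship $\Downarrow^{S^{y}_{N_i}}_{S^{x}_{N_i}}$, some honest peer in a subsequent iteration commits the reversed relationship $\Downarrow^{S^{x}_{N_i}}_{S^{y}_{N_i}}$ on the same indexed pair.

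Each commit is justified by a quorum of $2f+1$ distinct Commit messages, giving sets $Q_1,Q_2$ of size $2f+1$. Because $n=3f+1$, inclusion--exclusion forces
\[
|Q_1\cap Q_2|\;\geq\; 2(2f+1)-(3f+1)\;=\;f+1,
\]
so the intersection must contain at least one honest node $N_h$, since at most $f$ nodes are Byzantine. I would then argue that $N_h$ would have signed two Commit messages for opposite orderings of the same pair $(S^{x}_{N_i},S^{y}_{N_i})$, which is forbidden by the Prepare-phase check ``the state of the claimed message has no conflict'' together with the rule that a Pre-prepare is accepted at most once per sequence number. This yields the contradiction.

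To lift the argument from a single pair to ``all subsequent iterations,'' I would then observe that once $N_h$ has recorded the relationship as committed in its local log, the Prepare-phase validation will reject any Pre-prepare whose embedded state is inconsistent with that log; hence every future Commit from $N_h$ --- and, by repeating the intersection argument, from every honest node --- must extend rather than contradict $\Downarrow^{S^{y}_{N_i}}_{S^{x}_{N_i}}$. Induction on the iteration number then completes the proof.

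The step I expect to be the main obstacle is reconciling the argument with the complementary mechanism, which explicitly permits a node whose counter has exceeded its bound to pull the newest state from peers and accept a reversed ordering. I would have to show that this reversal can only fire on states for which no $2f+1$ Commit quorum has ever been collected, so the hypothesis ``first honest node to commit'' is never applied to a position that will later be forcibly flipped. Making this separation precise --- in particular, ruling out the timing window in which one honest node commits on its chain while another honest node is on the verge of timing out on the same pair --- is the delicate piece that turns the sketch into a rigorous argument.
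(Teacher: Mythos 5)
The paper does not actually prove this lemma: it introduces it with the words ``we firstly \emph{assume} that honest nodes are consistent for their commits \dots\ Formally, we define the above intuition in the Lemma,'' and then uses the lemma as a premise in the proof of Theorem~1 (where the contradiction is discharged by the phrase ``this situation contradicts Lemma~1''). So your proposal is not a restatement of the paper's argument --- it is an attempt to supply a proof where the paper supplies only an assumption. Your quorum-intersection skeleton is the standard and correct way to do this: with $n=3f+1$, two commit quorums of size $2f+1$ intersect in at least $f+1$ nodes, hence in at least one honest node, and the Prepare-phase conflict check plus the one-acceptance rule for \textit{Pre-prepare} messages prevent that honest node from endorsing both orderings of the same pair. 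If carried through, this would turn the paper's assumed lemma into a theorem, which is a genuine improvement; it also exposes that the paper's Theorem~1 is then nearly a corollary of the lemma rather than an independent result, since both assert that honest nodes cannot reach conflicting decisions on $\Downarrow^{S^{y}_{\star}}_{S^{x}_{\star}}$.

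The one substantive gap is exactly the one you flag yourself: the complementary mechanism explicitly instructs a node whose retry counter exceeds its bound to ``pull the newest state from other peers and accept the reversed relationship and rebroadcast it.'' Unless you prove that this reversal path can only be taken for a pair on which no $2f+1$ commit quorum was ever assembled, the hypothesis ``$N_i$ is the first honest node to commit'' does not by itself rule out a later forced flip at some honest peer, and the induction on iterations breaks at that step. The paper gives you no help here --- it asserts the mechanism is ``essential to achieve relative persistence'' without reconciling the two --- so closing this gap requires an argument not present anywhere in the paper: for instance, showing that an honest node that has already participated in a $2f+1$ commit quorum for $\Downarrow^{S^{y}_{N_i}}_{S^{x}_{N_i}}$ will never see $2f+1$ peers advertising the reversed ordering as their ``newest state,'' again by quorum intersection. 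Until that is written down, your proof is a sound sketch with one explicitly identified, and genuinely open, hole.
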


\smallskip
\begin{theorem} \textbf{(Relative persistence)} \textit{If the relative position of two state $y$ and $x$ is accepted by the node $N_i$ in iteration $r$ and by the node $N_j$ in $r+1$, respectively, their decisions on the relationship are the same, represented as  $\Downarrow^{S^{y}_{i}}_{S^{x}_{i}} = \Downarrow^{S^{y}_{j}}_{S^{x}_{j}}$.}
\end{theorem}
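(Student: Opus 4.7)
The plan is to invoke the standard Byzantine quorum-intersection argument and combine it with Lemma~\ref{lem}. Since $N_i$ accepted $\Downarrow^{S^{y}_{i}}_{S^{x}_{i}}$ in iteration $r$, it collected a set $Q_i$ of $2f+1$ matching \textit{Commit} messages; likewise, in iteration $r+1$, $N_j$ collected a set $Q_j$ of $2f+1$ matching \textit{Commit} messages for its decision $\Downarrow^{S^{y}_{j}}_{S^{x}_{j}}$. Because the total number of participants is $n=3f+1$, inclusion--exclusion gives $|Q_i\cap Q_j|\geq (2f+1)+(2f+1)-(3f+1)=f+1$, so the two quorums share at least one honest node $N_h$ (since at most $f$ nodes are Byzantine).

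Next I would argue by contradiction: assume $\Downarrow^{S^{y}_{i}}_{S^{x}_{i}}\neq \Downarrow^{S^{y}_{j}}_{S^{x}_{j}}$. Then $N_h$ contributed a \textit{Commit} message in iteration $r$ endorsing the first relationship and another in iteration $r+1$ endorsing the incompatible one. Identifying $N_i$ as the ``first honest node to commit a relationship'' in the sense of Lemma~\ref{lem} (or, if $N_h$ itself committed first, using $N_h$ in that role), the lemma says that all subsequent commits from $N_h$ must be consistent with the already-committed relationship $\Downarrow^{S^{y}_{\star}}_{S^{x}_{\star}}$. This directly contradicts $N_h$'s second commit and therefore forces $\Downarrow^{S^{y}_{i}}_{S^{x}_{i}}=\Downarrow^{S^{y}_{j}}_{S^{x}_{j}}$.

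The main obstacle will be ruling out that the complementary mechanism silently flips an honest node's already-endorsed position between iterations $r$ and $r+1$. I would handle this by observing that the complementary mechanism only fires when a state has \emph{not} collected the $2f+1$ threshold and its retry counter is exhausted; once $N_i$ has observed $2f+1$ valid \textit{Commit} messages, the relationship is, by definition of the \textit{Commit} phase, frozen in its local ledger, and any honest contributor to $Q_i$ has likewise updated its local chain before emitting its \textit{Commit}. Thus, the honest $N_h\in Q_i\cap Q_j$ cannot legally reverse its endorsement in iteration $r+1$, which is exactly the hypothesis needed to apply Lemma~\ref{lem}. A secondary subtlety is making sure the proof symmetry works when $r$ and $r+1$ are replaced by arbitrary rounds; since the intersection argument is round-agnostic, the same reasoning extends verbatim, yielding relative persistence across the whole execution.
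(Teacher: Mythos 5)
Your proof is correct at the level of rigor the paper operates at, but it takes a genuinely different route. You run the classic two-quorum intersection argument: both $N_i$'s and $N_j$'s acceptance quorums have size $2f+1$ out of $n=3f+1$, so they overlap in at least $f+1$ nodes, hence contain a common honest witness $N_h$ whose two \textit{Commit} messages would have to endorse conflicting relationships, contradicting Lemma~\ref{lem}. The paper instead uses only \emph{one} quorum: it extracts a single honest node $N_k$ from $N_i$'s $2f+1$ commits, argues that $N_k$ broadcast the correct relative position to everyone, that other honest nodes would therefore detect and repair any mismatch via the complementary (reattachment) mechanism, and that the adversary could never assemble $2f+1$ commits on the reversed relationship without a majority of honest nodes turning traitor --- which is what contradicts Lemma~\ref{lem}. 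Your version is tighter and more self-contained: it localizes the contradiction in a single honest witness and, as you correctly observe, does not need the complementary mechanism for safety at all (it only needs that an honest node never reverses an endorsement after the threshold is met, which is exactly what Lemma~\ref{lem} supplies). The paper's version is looser as a contradiction but more informative about \emph{why} conflicts get repaired in practice. One caveat on your arithmetic: the paper at one point assumes only ``at least $2f$'' honest nodes, under which your $f+1$ intersection could in principle be entirely Byzantine; your argument is sound under the standard (and elsewhere stated) assumption of at most $f$ Byzantine nodes out of $3f+1$.
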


\begin{proof} 
We prove the theorem by contradiction induction. We assume the relationship $\Downarrow^{S^{y}_{i}}_{S^{x}_{i}}$ is accepted by the node $N_i$. Similarly, we assume the $\Downarrow^{S^{y}_{j}}_{S^{x}_{j}}$ is accepted by the node $N_j$. We show that, in current iteration and all subsequent iterations, the reported relationships from $N_i$ and $N_j$ are consistent, namely, $\Downarrow^{S^{y}_{i}}_{S^{x}_{i}} = \Downarrow^{S^{y}_{j}}_{S^{x}_{j}}$, and further, no valid relationships other than the reported one can be agreed upon.

Suppose the property of relative persistence is violated, which indicates the relationship holds $\Downarrow^{S^{y}_{i}}_{S^{x}_{i}} \neq \Downarrow^{S^{y}_{j}}_{S^{x}_{j}}$. From the above assumption, we know the relationship $\Downarrow^{S^{y}_{i}}_{S^{x}_{i}}$ from the node $N_i$ has been accepted, which means $N_i$ received $2f + 1$ valid commit replies in current iteration. Among these replies, at least one of the commits comes from an honest node (assume $N_k$). Thus, $N_k$ must have received a \textit{right} relative position between two states, and forwarded this relationship to all other nodes. If $\Downarrow^{S^{y}_{k}}_{S^{x}_{k}} \neq \Downarrow^{S^{y}_{j}}_{S^{y}_{j}}$, other honest nodes can immediately detect the mismatch of relationships from different proposals. A wrong located state (with the same sequence number) is in a reversed position. The nodes then reattach the state until they obtain the consistent relationship. The dishonest nodes will never collect more that  $2f + 1$ valid commit messages unless the majority honest nodes become traitors in the current and following iterations. However, this situation contradicts \textit{Lemma 1}. Thus, we have $\Downarrow^{S^{y}_{i}}_{S^{x}_{i}} = \Downarrow^{S^{y}_{j}}_{S^{x}_{j}}$.
\end{proof} 

\medskip
Now, we move on to \textit{liveness} properties and explain how our algorithm guarantees all honest nodes agree on the same relative relationship and reach the termination.
\smallskip
\begin{theorem}
\textbf{(Liveness)} \textit{If a correct relationship $\Downarrow^{S^{y}_{i}}_{S^{x}_{i}}$ is committed, every honest node will eventually accept it.}
\end{theorem}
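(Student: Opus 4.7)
The plan is to prove liveness by showing two things in sequence: first, that once a correct relationship is committed by one honest node, the $2f+1$ commit messages underlying that commitment guarantee dissemination to a majority honest quorum; second, that even an honest node that has temporarily stored a conflicting local position is forced, via the complementary mechanism, to converge on the committed relationship within bounded time. I would take the same contradiction-by-induction style used in the proof of relative persistence, combined with an explicit use of the $\delta$-bounded partial synchrony assumption from the security model.

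First I would fix an honest node $N_i$ that commits the correct relationship $\Downarrow^{S^{y}_{i}}_{S^{x}_{i}}$ in iteration $r$. By the commit rule this node has collected $2f+1$ valid \textit{Commit} messages, so by a standard quorum-intersection argument at least $f+1$ of those come from honest nodes. Each such honest node has therefore broadcast a matching \textit{Commit}, and by the partial synchrony assumption every honest node in the network receives each of these messages within the known maximum delay $\delta$. Combined with Lemma~\ref{lem}, which rules out any honest node later committing a conflicting relationship, this ensures that the only valid relationship observable at honest recipients for this pair of states is $\Downarrow^{S^{y}_{i}}_{S^{x}_{i}}$.

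Next I would treat the two cases for an arbitrary honest node $N_k$. If $N_k$ has not yet stored any position between $S^x$ and $S^y$, then after at most $\delta$ it has $f+1$ honest \textit{Commit} messages in hand; together with its own vote this reaches the $2f+1$ threshold, and by the protocol rule $N_k$ transits state and accepts $\Downarrow^{S^{y}_{i}}_{S^{x}_{i}}$. If instead $N_k$ has stored a conflicting (reversed) local position, the complementary mechanism kicks in: because the Byzantine set has size at most $f$, the node's own \textit{Pre-prepare} cannot accumulate $2f+1$ matching replies, the per-message counter increases on each retry, and once it exceeds the predefined bound the node is compelled to pull the newest state from peers, reattach $S$, and rebroadcast. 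At that point the analysis reduces to the first case, and $N_k$ accepts $\Downarrow^{S^{y}_{i}}_{S^{x}_{i}}$ within an additional bounded number of retry intervals.

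I expect the main obstacle to be the second case, because the paper's complementary mechanism is described informally and involves a counter, a timeout, and a state pull that interact subtly; a rigorous argument must show that (a) the counter bound is actually reached in finite time, (b) the newest state pulled from honest peers is guaranteed to carry the committed relationship (using Lemma~\ref{lem} once more), and (c) the potential \textit{Failure}/\textit{timeout} branch is not triggered for the correct relationship. I would handle these by bounding retry intervals in terms of $\delta$, invoking honest-majority to argue that a pull from $2f+1$ peers includes at least one honest response consistent with the committed relationship, and noting that \textit{Failure} only arises when no quorum for either orientation is reachable, which contradicts the existing commit by $N_i$. Thus every honest node eventually accepts the committed relationship, giving liveness.
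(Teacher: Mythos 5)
Your route is genuinely different from the paper's. The paper argues by contradiction in a few informal lines: it supposes fewer than a threshold of nodes accept the decision, observes that then no honest node can collect $2f+1$ \textit{Commit} messages and terminate, and concludes that upon timeout the node reverses and restarts until confirmation is achieved. You instead give a direct, constructive argument: quorum intersection to extract honest \textit{Commit} senders, $\delta$-bounded dissemination, a case split on whether the receiving honest node holds a conflicting local position, and an explicit reduction of the conflicting case to the clean case via the complementary mechanism. Your decomposition is considerably more detailed than the paper's and, unlike the paper's, actually isolates where the complementary mechanism is load-bearing (your obstacles (a)--(c) are real issues the paper glosses over entirely).

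However, there is one concrete arithmetic gap in your first case. From a committing quorum of $2f+1$ you correctly extract at least $f+1$ honest \textit{Commit} senders, but $f+1$ received \textit{Commit} messages plus $N_k$'s own vote is only $f+2$, which is strictly below the $2f+1$ acceptance threshold whenever $f\geq 1$. So a fresh honest node does \emph{not} reach the threshold from those messages alone. To close this you need to argue that \emph{all} honest nodes (of which there are at least $2f+1$) eventually broadcast a matching \textit{Commit} --- which in turn requires each of them to first collect $2f+1$ matching \textit{Prepare} messages, a condition you have not established for nodes outside $N_i$'s quorum. In PBFT-style protocols this is precisely the step that retransmission or view change covers; here it must be covered by the rebroadcast/counter part of the complementary mechanism, which you invoke only in the conflicting-position case. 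Folding the "not enough \textit{Prepare}/\textit{Commit} messages yet" situation into the retry analysis of your second case would repair the argument; as written, the first case does not stand on its own.
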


\begin{proof} 
Suppose that the property of \textit{liveness} is violated, which means only a small fraction of nodes (less than the threshold) or even none of the nodes, accept the final decision on the relationship  $\Downarrow^{S^{y}_{\star}}_{S^{x}_{\star}}$ between the states $y$ and $x$. Equally, a majority of nodes reject or have no responses for the commit decision after a sufficient period of time. 
We show that, in all subsequent rounds, the malformed relative relationship from the dishonest nodes will never be accepted. 

If a randomly selected group of dishonest nodes (more than $\lfloor \frac{3f+1}{2}\rfloor $) reject the decision of the relationship $\Downarrow^{S^{y}_{\star}}_{S^{x}_{\star}}$, an honest node $N_i$ will never be terminated. This is due to the fact that the state of $N_i$ has to be confirmed by enough nodes, which requires at least $2f + 1$ commit messages from the peers. The confirmed message is based on the other honest nodes' states. If the majority of nodes reject a correct proposal, any honest node cannot get confirmed, either. Thus, when the times out, it must reverse the relative relationship and restart the generation phase to achieve the final confirmation. 
\end{proof}

\section{Evaluation}
\label{sec-evaluation}

\smallskip
\noindent\textbf{Experimental Configurations.} Our experiments are conducted on 8 Dell R730 rack servers in a local cluster, with dual 2.1 GHz Opteron CPUs. The bandwidth is connected with 1 Gbps switched Ethernet. The operating system is running based on Ubuntu 16.04.1 LTS version.

\smallskip
\noindent\textbf{Performance Evaluation.}
The throughput represents the rate of transactions being confirmed at a certain time interval. We adopt the log-based approach~\cite{zheng2018detailed} and the concept of transactions per second (TPS) to measure the rate. In our experiments, we set the production of transactions at a constant rate and calculated the confirmation time of transactions in a fixed time. The time is measured in seconds via wall clock running time. To achieve a fair test result, we repeat the tests $300$ times. The results show that the average throughput of Sphinx reaches $43$k TPS with $8$ full nodes and drops to around $5000$ TPS when given $64$ full nodes. However, our system is greatly faster than Ethereum. Given the same testing environment, Ethereum (version 1.9.25, released on December 11, 2020) only reaches $355$, $311$, $268$, $91$ TPS under the setting of $8$, $16$, $32$, $64$ nodes, respectively. To explore the reason behind the high throughout, we further evaluate the performance of each individual algorithm.

Firstly, in the Pre-prepare stage, the transaction is added into a block. The evaluation results (see \underline{Fig.\ref{fig:bcr}}) show that is takes approximately $15$ milliseconds to finish the block generation algorithm, making our system reach an extremely high throughput. The high-speed generation rate is based on the parallel processing mechanism. Each node generates and maintains his own ledger, even if it does not consist of the latest state header. In contrast, the node in classic blockchain systems such as Ethereum cannot mine until it obtains the latest block header, which causes a severe delay.

Then, we consider the verification time and message broadcasting time in the Prepare stage. The verification time represents the length of time in verifying specific fields of states. In particular, it covers the time of checking the validity of signatures, the correctness of proofs, and the non-conflicts of messages. The results show that our verification algorithm only takes approximately $200$ milliseconds, which is remarkably efficient. The broadcasting time indicates the length of time in propagating a message from one node to another. We assume that all the nodes share the same timestamp, where such configurations can be achieved by NTP service~\cite{mills1991internet}. When the message is generated by a node, it is broadcast to peers in the network. We adopt the concept of \textit{coverage rate} to represent the percentage of reached nodes. Our experiment results show that it takes around $500$ milliseconds to achieve $100\%$ coverage rate for a message, which is the main bottleneck of our system.


\begin{figure}[htb!]
\centering
\begin{tikzpicture}
\begin{axis}[
    xbar stacked,
    legend style={
    legend columns=4,
        at={(xticklabel cs:0.5)},
        anchor=north,
        draw=none
    },
    ytick=data,
    axis y line*=none,
    axis x line*=bottom,
    tick label style={font=\footnotesize},
    legend style={font=\footnotesize},
    label style={font=\footnotesize},
    xtick={0,100,200,300,400,500},
    width=.48\textwidth,
    bar width=3mm,
    xlabel={Time in ms},
    yticklabels={Avg,Max,Min},
    xmin=0,
    xmax=600,
    area legend,
    y=6mm,
    enlarge y limits={abs=0.625},
]
\addplot[findOptimalPartition,fill=findOptimalPartition] coordinates
{(13,0) (30,1) (9,2) };

\addplot[storeClusterComponent,fill=storeClusterComponent] coordinates
{(39,0) (55,1) (38,2) };

\addplot[dbscan,fill=dbscan] coordinates
{(276,0) (318,1) (232,2) };

\addplot[constructCluster,fill=constructCluster] coordinates
{(5,0) (9,1) (3,2) };

\legend{generation,verification, propagation, response}
\end{axis}  
\end{tikzpicture}
\caption{Execution Time (ms) of Different Operations}
\label{fig:bcr}
\end{figure}
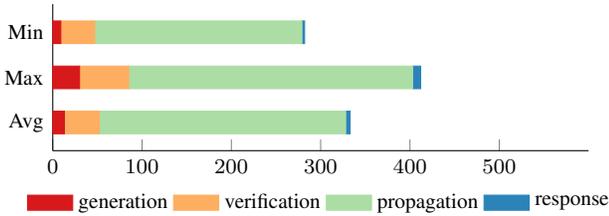

Furthermore, we check the response time of a message. The response time represents the length of time in replying a message from any chain node to the client. This operation averagely takes approximately 15 milliseconds, which is fast and efficient in our implementation. 

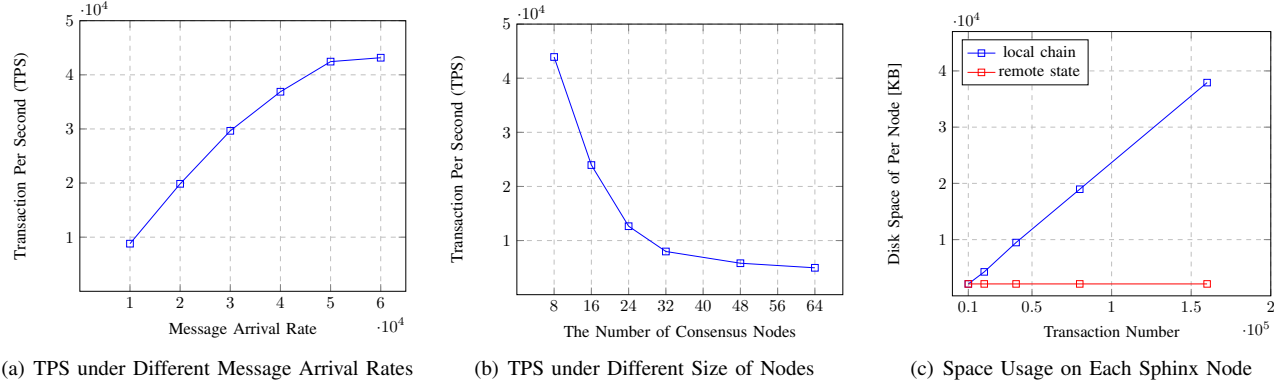
\begin{figure*}[htb!]
\caption{Scalability and Disk Space Evaluation}
\label{tab-scalab}
\centering

\subfigure[TPS under Different Message Arrival Rates]{
\resizebox{0.3\textwidth}{!}{
\begin{tikzpicture}
\begin{axis}[
    xlabel={Message Arrival Rate},
    ylabel={Transaction Per Second (TPS)},
    xmin=0, xmax=65000,
    ymin=0, ymax=50000,
    xtick={10000,20000,30000,40000,50000,60000},
    ytick={10000,20000,30000,40000,50000,60000},
    legend pos=north west,
    ymajorgrids=true,
    xmajorgrids=true,
    grid style=dashed,
]

\addplot[
    color=blue,
    mark=square,
    ]
    coordinates {
    (10000,8800)(20000,19860)(30000,29654)(40000,36874)(50000,42440)(60000,43160)
    };
\end{axis}
\end{tikzpicture}
    \label{fig:model}
}
}
\subfigure[TPS under Different Size of Nodes]{
\resizebox{0.3\textwidth}{!}{
\begin{tikzpicture}
\begin{axis}[
    xlabel={The Number of Consensus Nodes},
    ylabel={Transaction Per Second (TPS)},
    xmin=0, xmax=70,
    ymin=0, ymax=50000,
    xtick={8,16,24,32,40,48,56,64},
    ytick={10000,20000,30000,40000,50000},
    legend pos=north west,
    ymajorgrids=true,
    xmajorgrids=true,
    grid style=dashed,
]

\addplot[
    color=blue,
    mark=square,
    ]
    coordinates {
    (8,43920)(16,23954)(24,12654)(32,7981)(48,5813)(64,4966)
    };
\end{axis}
\end{tikzpicture}
    \label{fig:model}
}
}
\subfigure[Space Usage on Each Sphinx Node]{
\resizebox{0.3\textwidth}{!}{
\begin{tikzpicture}
\begin{axis}[
    xlabel={Transaction Number},
    ylabel={Disk Space of Per Node [KB]},
    xmin=0, xmax=200000,
    ymin=0, ymax=47000,
    xtick={10000,50000,100000,150000,200000},
    ytick={10000,20000,30000,40000},
    legend pos=north west,
    ymajorgrids=true,
    xmajorgrids=true,
    grid style=dashed,
]

\addplot[
    color=blue,
    mark=square,
    ]
    coordinates {
    (10000,2120)(20000,4260)(40000,9480)(80000,18960)(160000,37920)
    };
    \addplot[
    color=red,
    mark=square,
    ]
    coordinates {
    (10000,2120)(20000,2120)(40000,2120)(80000,2120)(160000,2120)
    };
    \legend{local chain, remote state}
\end{axis}
\end{tikzpicture}
    \label{fig:model}
}
}
\end{figure*}

\smallskip
\noindent\textbf{Scalability.}
The scalability in Sphinx is used to describe the capability to handle an increasing number of transactions. To study its scalability, we set the block size at 4MB and block generation rate at $3$s. Then, we run experiments with the following configurations: (i) increase the transaction arrival rate from the client in a fixed number of nodes; (ii) increase the number of nodes with a fixed transaction arrival rate. To obtain an accurate testing result, we repeat the experiments $300$ times and calculate their average performance.

The first experiment attempts to evaluate the average TPS over different transaction arrival rates. We (randomly) set $8$ nodes involved in the system and start the testing at a rate of $2000$ tx/s. Then, we increase the rate in a fixed interval, as it is shown in \underline{Fig.\ref{tab-scalab}.(a)}. The evaluation shows that when the arrival rate is less than $50$k, the throughput increases linearly. When the arrival rate gets close to or above the saturation point ($50$k), the throughput flats out at around $43$k TPS. The reason is that the propagation latency becomes the primary bottleneck, which makes the influence of the arrival rate negligible. Meanwhile, the number of transactions pended in the verification phase also affects the final results.


The second experiment attempts to evaluate the throughput with respect to an increasing number of nodes. Our algorithm, implemented based on PBFT, is only suitable for permissioned blockchain. We limit the size of committee to an upper bound of $64$, and the number of these nodes will remain stable. We send transactions at a fixed rate ($2000$ tx/s) and adjust the participated nodes from $8$ to $64$. \underline{Fig.\ref{tab-scalab}.(b)} shows the throughput of the system drops down along as the participants increase. We go back to the fact that the total throughput of a system is calculated by the multiplication of the number of participated nodes and the TPS of each individual node. Merely increasing the number of participants improves the concurrency, but scarifies the performance of individual nodes. Each of them has to wait for enough replies from peers to exceed the threshold. Thus, the scalability of Sphinx cannot be improved without limitation. 

\smallskip
\noindent\textbf{Disk Space.}
To check the feasibility, we also provide the evaluations of disk space. In Sphinx, the storage of the node grows each time the messages appended into the local chain. Meanwhile, for checking the correctness of the behaviors from other nodes, each auditor must store the state received from other nodes. Thus, we consider space evaluation in two aspects: (i) the size of the local chain, and (ii) the size of the remote state. We assume that there are eight nodes with the transaction creation rate of $100$ messages/second. Then, we monitor the space usage of each node and analyze their growth rates. As shown in \underline{Fig.\ref{tab-scalab}.(c)}, the results indicate that the size of the local chain grows linearly with the increased transactions. On average, the size of the local chain in each node grows at the rate of $0.212$ KB per message. In contrast, the size of the remote state is static which is independent of the scale of transactions. This is easy to understand because the disk usage of remote states relies on the scale of the participated nodes, where the number is fixed in the initial configuration.

\section{Use Cases of Our Consensus}
\label{sec-example}


\noindent\textbf{Certificate System.} Issuing and verifying certificates are slow and complicated, since errors and fraud threaten their usability. A blockchain-based certificate system usually uploads the certificate metadata to the blockchain to achieve reliable management. However, current systems such as Bitcoin suffer from extremely low performance, making the certificate confirmed slowly. This greatly limits the wide adoption of classic blockchain systems. We observed that the key idea behind a certificate system is to store certificates transparently rather than to sort their orders. Thus, our system can perfectly meet the requirements: (i) the transaction data on the chain proves the existence of uploaded certificates; (ii) the high-performance system without linear ordered sequence makes it applicable to the large-scale certificates scenarios.

\smallskip
\noindent\textbf{Log System.} Blockchain technology provides a new approach for the log system since it provides a publicly accessible bulletin board. Recording the logs that are generated by the software onto a distributed storage system greatly improves security. The irreversibility of the blockchain system guarantees the uploaded logs cannot be easily falsified. However, the low performance of current blockchain systems significantly retards the procedures (upload/store/download/change) of logs. This impedes their applications to business scenarios. Our weak consensus benefits current approaches with the ability to prove the existence as well as high performance of processing, enabling blockchain-based log systems practical.

\section{Related Work}
\label{sec-relatedwork}

Our scheme adopts the model of \textit{parallel chains} \cite{fitzi2018parallel}, where each node maintains their own chains. Generally, two types of consensus mechanisms are adopted in our model: the variants of BFT protocols called \textit{leaderless BFT protocols}, and the modified NC protocols named \textit{extended Nakamoto consensus}. 

\smallskip
\noindent\textbf{Leaderless BFT protocols.} \textit{Hashgraph} \cite{baird2016swirlds} was proposed with the leaderless BFT  mechanism. Each node maintains a separate chain, but they are required to interact via the gossip protocol mutually. The node that receives the synchronization information, creates a message locally to record the history, and then broadcast it to peers. Other nodes iterate the same procedure. Hashgraph achieves the consensus through an asynchronous Byzantine consensus. However, the disseminated information containing all previous histories is heavy, which significantly increases communication overhead. Parallel chains in \textit{DEXON} \cite{chen2018dexon} confirm each other through the reference field called $\mathsf{ack}$, and achieve consensus through these references. The consensus consists of three steps. Firstly, each block is deterministically arranged into only one single chain by comparing the residue of its hash value. Then, DEXON employs the technique of a variant BFT protocol stemmed from Algorand \cite{gilad2017algorand} as its single-chain consensus. Finally, it proposes a sorting mechanism to determine the total order of all blocks across parallel chains. In the case of a network delay, these steps will easily be congested, which will negatively affect the entire system. \textit{Aleph} \cite{gkagol2019aleph} is a leaderless BFT distributed system. Each node concurrently issues units (carrying messages). The units are organized in different sets. Units in these sets undergo a voting algorithm. The unit is considered as valid if it receives more votes than the threshold. However, this procedure still relies on linearization, slowing down the performance if any conflict exists. 

\smallskip
\noindent\textbf{Extended Nakamoto Consensus.}  \textit{OHIE} \cite{yu2018ohie} shares similarities in composing multiple parallel chains. OHIE adopts classic Nakamoto consensus for each individual chain. Miners need to calculate a puzzle to generate blocks and additionally have to sort all the blocks. Blocks arrive at a global order across all parallel chains, hence achieving consistency. However, the totally ordered sequence limits the upper bound of performance due to the strong assumption of consistency. \textit{Prism} \cite{bagaria2019prism} structures the network with three types of blocks: proposer blocks, transaction blocks, and voter blocks. These blocks replace the functionalities of a common block in Nakamoto consensus. The consistency is achieved by sorting all transaction blocks. The total ordering is ensured by its proposer blocks, which are selected by voter blocks. However, even decoupled functionalities reach their upper bounds, the procedure of a total ordering algorithm still becomes the bottleneck of throughput. \textit{Chainweb} \cite{will2019chainweb} aims to scale Nakamoto consensus by maintaining multiple parallel chains. It is based on a PoW consensus that incorporates each other’s Merkle roots to increase the hash rate. Each chain in the network mines the same cryptocurrency, which can be transferred cross-chain via a simple payment verification. The method reduces several coins from one chain and creates equal amounts on another chain. However, Kiffer \textit{et al.} \cite{kiffer2018better}\cite{fitzi2018parallel} argued that Chainweb utilizing Nakamoto consensus, is bounded by the same throughput under a same consistency guarantee.

\section{Conclusion}
\label{sec-conclusion}

In this paper, we propose a weak consensus algorithm by relaxing the strong consistency promise. We apply our algorithm to a high-performance blockchain system, \textit{Sphinx}. The system runs with parallel chains, where all transactions and blocks are concurrently processed. We further define the security of \textit{relative persistence} and \textit{liveness} and prove that our system achieves these properties. Also, we provide a full implementation with all layered components, including P2P/consensus/ledger/etc. The evaluation indicates that Sphinx achieves high performance with approximately 43k TPS in total. We further explore the potential applications to demonstrate feasibility and applicability.

\smallskip
\noindent\textbf{Acknowledgement.}
Rujia Li was supported by the National Science Foundation of China under Grant No. 61672015 and Guangdong Provincial Key Laboratory (Grant No. 2020B121201001). Also, the authors would like to thank their supervisors: Qi Wang (Southern University of Science and Technology), David Galindo (University of Birmingham), Yang Xiang (Swinburne University of Technology), Shiping Chen (CSIRO Data61), and team members from HPB foundation for their constructive suggestions on this work.

{\footnotesize \bibliographystyle{IEEEtran}
\bibliography{bib}}

\end{document}